\documentclass[pra,aps,reprint,a4paper,superscriptaddress,floatfix,longbibliography]{revtex4-2}
\usepackage{times}
\usepackage{graphicx}
\usepackage{epsfig}
\usepackage{amsfonts}
\usepackage{amsmath}
\usepackage{amssymb,bbm}
\usepackage{dsfont}
\usepackage{amsthm}% in order to use a black box at the end of proofs
\usepackage{color}
\usepackage[colorlinks=true,linkcolor=blue,citecolor=blue,urlcolor=blue]{hyperref}
\usepackage{comment}
\usepackage{braket}
\usepackage{physics}
\usepackage{multirow}
\usepackage{diagbox}
\setlength{\tabcolsep}{1pt}
\setcounter{MaxMatrixCols}{30}

\newtheorem{theorem}{Theorem}

\newtheorem{proposition}{Proposition}

\newcommand{\id}{\mathds{1}}

%%%%%%%%%%%%%%%%%%%%%%%%%%%%%%%%%%%%%%%%%%%%%%%%%%%%%%%%%%%%%%%%%%%

\begin{document}

%%%%%%%%%%%%%%%%%%%%%%%%%%%%%%%%%%%%%%%%%%%%%%%%%%%%%%%%%%%%%%%%%%%

\title{Supersinglets can be self-tested with perfect quantum strategies}

%%%%%%%%%%%%%%%%%%%%%%%%%%%%%%%%%%%%%%%%%%%%%%%%%%%%%%%%%%%%%%%%%%%

\author{Debashis Saha}
\email{saha@iisertvm.ac.in}
\affiliation{School of Physics, Indian Institute of Science Education and Research Thiruvananthapuram, Kerala 695551, India}
\affiliation{Department of Physics, School of Basic Sciences, Indian Institute of Technology Bhubaneswar,
Bhubaneswar, Odisha 752050, India}

\author{Ad\'an~Cabello}
\email{adan@us.es}
\affiliation{Departamento de F\'{\i}sica Aplicada II, Universidad de Sevilla, E-41012 Sevilla,
Spain}
\affiliation{Instituto Carlos~I de F\'{\i}sica Te\'orica y Computacional, Universidad de
Sevilla, E-41012 Sevilla, Spain}

%%%%%%%%%%%%%%%%%%%%%%%%%%%%%%%%%%%%%%%%%%%%%%%%%%%%%%%%%%%%%%%%%%%

\begin{abstract}
Supersinglets are states of spin-zero of $d \ge 3$ particles of $d$ levels. They are invariant under unitary transformations of the form $U^{\otimes d}$ and have applications in metrology, error protection, and communication.
They also violate some specific Bell inequalities. However, none any of these applications {\em require} supersinglets nor do any of these Bell inequality violations capture the unique properties of the supersinglets. This leads to two questions. Question 1 is whether there exists a task that can be solved only with supersinglets. Question 2 is whether supersinglets can produce a unique $d$-partite, $d$-dimensional nonlocal signature. We answer both questions affirmatively by presenting a protocol that self-test all supersinglets by producing $d$-partite, $d$-dimensional {\em perfect} quantum strategies for any $d \ge 3$.
\end{abstract}

%%%%%%%%%%%%%%%%%%%%%%%%%%%%%%%%%%%%%%%%%%%%%%%%%%%%%%%%%%%%%%%%%%%

\maketitle

%%%%%%%%%%%%%%%%%%%%%%%%%%%%%%%%%%%%%%%%%%%%%%%%%%%%%%%%%%%%%%%%%%%

\section{Introduction}

Supersinglets \cite{Cabello:2002PRL,Cabello:2003JMP,Jin:2005PRA,Quiang:2011PLA,Chen:2016SR,PhysRevA.106.033314} are states of total spin
zero of $d \ge 3$ particles of $d$ levels. They can be written as
\begin{equation}
|{\cal S}_d^{(d)}\rangle=\frac{1}{\sqrt{d!}}
\sum_{\scriptscriptstyle{ {\stackrel{\scriptscriptstyle{\rm
permutations}} {{\rm of}\;(0,1,\ldots d-1)}}}} \!\!\!\!\!\!
\varepsilon_{a_0 a_1 \ldots a_{d-1}} \left| a_0 a_1 \ldots a_{d-1} \right\rangle,
\label{SNSN0}
\end{equation}
where $\varepsilon_{a_0 a_1 \ldots a_{d-1}}$ is the Levi-Civita symbol, which is $+1$ or $-1$ depending on whether $(a_0, a_1, \ldots, a_{d-1})$ is an even or odd permutation of $(0,1,\ldots,d-1)$.
The name $d$-qudit-supersinglets follows from that they generalize, to more particles and higher dimensions, the two-qubit singlet state, $|{\cal S}_2^{(2)}\rangle= |\psi^-\rangle = \frac{1}{\sqrt{2}} (|01\rangle - |10\rangle)$, ubiquitous in quantum information. Physically, $|{\cal S}_d^{(d)}\rangle$ are the states when a spin zero particle decays into $d$ particles of spin $(d-1)/2$. Remarkably, $|{\cal S}_d^{(d)}\rangle$ are invariant under the tensor product of $d$ equal unitary operations, that is,
\begin{equation} \label{prop1}
U^{\bigotimes d} |{\cal S}_d^{(d)}\rangle= |{\cal
S}_d^{(d)}\rangle,
\end{equation}
where $U$ is a single-particle unitary operation. This property makes supersinglets useful for protecting quantum information in decoherence-free subspaces \cite{Bourennane:2004PRL,Cabello:2007PRA}, metrology \cite{PhysRevA.106.033314}, producing eigenstates of unknown unitary operators \cite{Hillery:2001PRA}, and communications tasks, including 
Byzantine agreement \cite{Fitzi:2001PRL}, secret sharing \cite{Cabello:2002PRL}, the $n$-strangers problem \cite{Cabello:2002PRL}, and the ``liar detection'' \cite{Cabello:2002PRL}. 

However, there are several questions about supersinglets for which we still have no answers. One of them is what can we do with supersinglets that is not possible with any other quantum state. This question is especially pertinent when we realize that none of the applications mentioned {\em require} supersinglets: each of them can be accomplished with simpler quantum states. This leads to {\em Question 1}:
is there a task that can only be accomplished with supersinglets? This question is formally equivalent to identifying a protocol that {\em self-tests} \cite{Yao_self,Supic:2020Q} supersinglets, that is, that produces a correlation that is a unique (up to local isometries) signature of the supersinglets. Self-testing protocols exist for all bipartite pure states \cite{Coladangelo:2017NC} and for all pure multipartite entangled states of qubits \cite{balanz:2024XXX}, but not for multipartite entangled states of high-dimensional particles. 

The second question is why supersinglets are special besides the features mentioned before. Specifically, why are they special in terms of nonlocality and entanglement. While it is known that supersinglets violate some Bell inequalities \cite{Cabello:2002PRL,Grandjean:2012PRA,Laskowski2014} and thus provide quantum advantage in some multipartite nonlocal games, an open question, {\em Question 2}, is whether supersinglets allow for $d$-partite, $d$-dimensional {\em perfect quantum strategies} or {\em pseudo-telepathy} \cite{GBT05}, that is, whether supersinglets allow $d$ parties, which cannot communicate to each other, to win {\em every} round of a $d$-partite, $d$-dimensional nonlocal game, as occurs with Greenberger-Horne-Zeilinger \cite{GHZ89,Mermin:1990AJP} and related states \cite{Guhne:2005PRL,Cabello:2008PRA} for $n$-partite two-dimensional games. The general question of which states allow for perfect quantum strategies is by itself an open problem \cite{Mancinska}. The importance of perfect quantum strategies goes far beyond nonlocal games. On the one hand, they are key tools for proving results such as the quantum computational advantage for shallow circuits \cite{Bravyi:2018SCI}, the solution to Tsirelson's problem \cite{Ji:2021CACM}, and the impossibility of classically simulating quantum correlations with arbitrary relaxations of measurement and parameter independence \cite{Vieira;2024XXX}. On the other hand, it has been recently proven \cite{Liu:2024PRR} that the existence of a bipartite perfect quantum correlation is equivalent to the existence of a quantum correlation with maximal nonlocal content \cite{Elitzur:1992PLA} or fully nonlocal correlation \cite{Aolita:2012PRA}, of a Greenberger-Horne-Zeilinger-like proof of Bell's theorem \cite{GHZ89,Mermin:1990AJP,Cabello:2001PRLb}, of a quantum correlation in a face of the nonsignaling polytope with no local points \cite{Liu:2024PRR}, and of a special type of Kochen-Specker set \cite{Cabello:2025PRL}.

Regarding entanglement, the $d$-$d$-supersinglets have, at the same time, genuinely high-dimensional entanglement (i.e., they cannot be generated by entangling subsystems of dimension $d' < d$) and genuinely multipartite entanglement (i.e., they cannot be generated by entangling only $n < d$ of the particles) \cite{Cobucci:2024SA}. {\em Question 3} is: are supersinglets {\em the} maximally genuinely $d$-partite and genuinely $d$-dimensional entangled states? \cite{Cobucci:2024SA}. An affirmative answer to any of these questions would push the experimental interest on supersinglets beyond the current theoretical stage \cite{Jin:2005PRA,Quiang:2011PLA,Chen:2016SR,PhysRevA.106.033314}.

The aim of this paper is to answer affirmatively questions 1 and 2. Moreover, we show that there is a single approach that answers both questions simultaneously, as, for any $d$-$d$-supersinglet with $d \ge 3$, there is a self-testing protocol in which the signature is a $d$-partite, $d$-dimensional perfect quantum strategy.

%%%%%%%%%%%%%%%%%%%%%%%%%%%%%%%%%%%%%%%%%%%%%%%%%%%%%%%%%%%%%%%%%%%

\section{Perfect strategies with Supersinglets}

%%%%%%%%%%%%%%%%%%%%%%%%%%%%%%%%%%%%%%%%%%%%%%%%%%%%%%%%%%%%%%%%%%%

Here, we show how to produce $d$-partite $d$-dimensional perfect quantum strategies using $d$-$d$ supersinglets. 
Our method has two ingredients: the symmetry properties of the $d$-$d$ supersinglets and Kochen-Specker (KS) sets \cite{Kochen:1967JMM} in a Hilbert space ${\cal H}=\mathbbm{C}^d$, with $d\ge 3$.

A {\em KS set} \cite{Kochen:1967JMM} is a finite set of rank-one projectors (observables) in a Hilbert space ${\cal H}=\mathbbm{C}^d$, with finite $d \ge 3$, which does not admit an assignment of $0$ or $1$ satisfying (i) two orthogonal projectors cannot both be assigned $1$, and (ii) for every set of $d$ mutually orthogonal projectors, one of them must be assigned $1$.

We will describe our method by using the KS set in ${\cal H}=\mathbbm{C}^4$ shown in Fig.~\ref{fig1}. The reason for this choice is that this is the KS set with the smallest number of vectors in any dimension \cite{Cabello:1996PLA,Xu:2020PRL}. The method works equally by using any complete KS set in $\mathbbm{C}^d$, where $d$ is the number of parties (and the number of levels of their quantum systems).

A KS set is {\em complete} \cite{Xu:2024PRL} if every pair of orthogonal projectors is
in a set of $d$ mutually orthogonal projectors. Compact and symmetric KS sets are known in $d=3$ \cite{Peres:1991JPA}, $d=4$ \cite{Peres:1991JPA,Cabello:1996PLA}, $d=6$ \cite{LisonekPRA2014}, $d=8$ \cite{Kernaghan:1995PLA}, and $d = 2^kp^m$ for $p$ prime, $k \in \{1,2\}, m \geq 1$ (as well as $d = 8p$ for $p \geq 19$, and other sporadic examples) \cite{Trandafir:2024XXX}. There are also methods to produce KS in any finite $d \ge 3$ \cite{Cabello:1996JPA,Cabello:2005PLA}. Not all these KS sets are complete. However, completing each of them is straightforward \cite{Xu:2024PRL}.

%%%%%%%%%%%%%%%%%%%%%%%%%%%%%%%%%%%%%%%%%%%%%%%%%%%%%%%%%%%%%%%%%%%
% Fig. 1
%%%%%%%%%%%%%%%%%%%%%%%%%%%%%%%%%%%%%%%%%%%%%%%%%%%%%%%%%%%%%%%%%%%

\begin{figure}[t!]
 \centering
 \includegraphics[scale = 0.54]{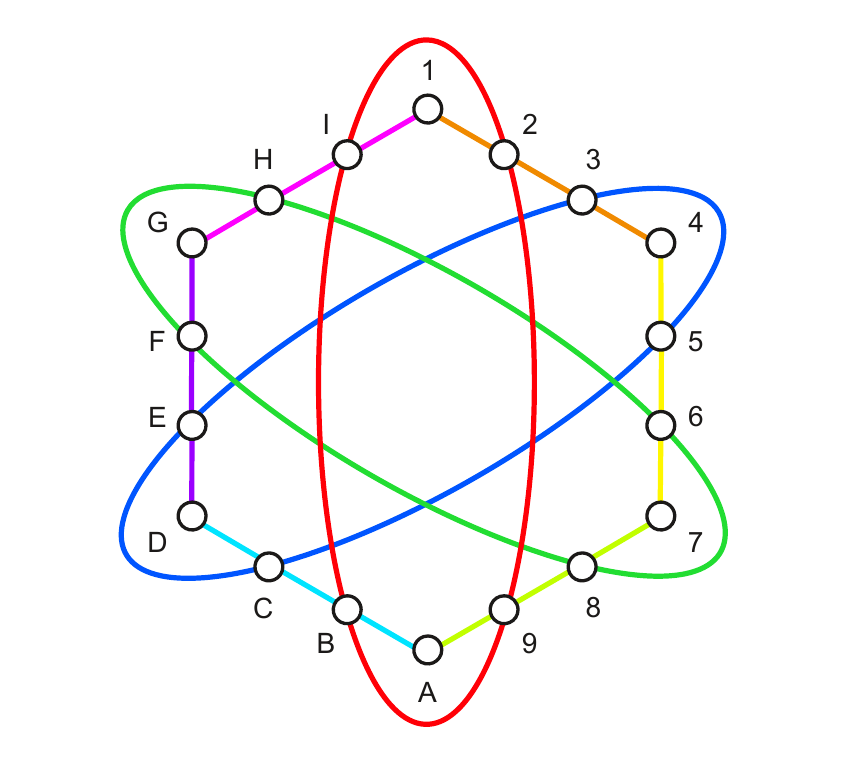}
 \caption{Relations of orthogonality between the elements of the KS set with the smallest number of rank-one observables (or vectors): the 18-vector nine-basis set \cite{Cabello:1996PLA}. Four dots in a line of the same color represent a tetrad of mutually orthogonal four-dimensional vectors. Each vector is in two tetrads and the total number of tetrads is odd. Therefore, it is impossible any assignment satisfying that, for
every set of four mutually orthogonal projectors, only one of them
must be assigned $1$. A quantum realization of the set is the following: $1=(1,0,0,0)$, $2=(0,1,0,0)$, $3=(0,0,1,1)$, $4=(0,0,1,-1)$, $5=(1,-1,0,0)$, $6=(1,1,-1,-1)$, $7=(1,1,1,1)$, $8=(1,-1,1,-1)$, $9=(1,0,-1,0)$, $A=(0,1,0,-1)$,
 $B=(1,0,1,0)$, $C=(1,1,-1,1)$, $D=(-1,1,1,1)$, $E=(1,1,1,-1)$, $F=(1,0,0,1)$, $G=(0,1,-1,0)$, $H=(0,1,1,0)$, $I=(0,0,0,1)$.}
 \label{fig1}
\end{figure}

%%%%%%%%%%%%%%%%%%%%%%%%%%%%%%%%%%%%%%%%%%%%%%%%%%%%%%%%%%%%%%%%%%%

For $d=4$, the game is as follows. Three of the four players, Alice, Bob, and Charlie, receive as inputs the same randomly chosen tetrad of orthogonal vectors of the KS set in Fig.~\ref{fig1}. The fourth player, David, receives as input a single vector randomly chosen from that tetrad. Each of Alice, Bob, and Charlie outputs two bits indicating which of their four vectors is assigned value $1$ (implicitly, the other three vectors are assigned
$0$). David outputs a bit assigning $1$ or $0$ to his vector. 
The winning condition is that Alice, Bob, and Charlie assign $1$ to three different vectors of the tetrad and David assigns $0$ if he has received one of these three or $1$ if he has received the fourth vector of the tetrad.

There is no perfect classical strategy for this game. A perfect classical strategy would imply that David can assign $1$ and $0$ to the vectors of the KS set satisfying (i) and (ii), something that is impossible by definition of KS set.
However, the following strategy is a perfect quantum strategy. The four players share a $\left|{\cal S}_4^{(4)}\right\rangle$. Each of Alice, Bob, and Charlie measures on its particle the projectors onto the vectors of the triad they receive. On his particle, David measures the rank-one projector on the vector he received.

This implies that, in any of the nine bases (tetrads) in Fig.~\ref{fig1}, $|{\cal S}_4^{(4)}\rangle$ has 
the same expression.
%as in Eq.~\eqref{s4s4}. 
For example, in the $\{D,E,F,G\}$ basis in Fig.~\ref{fig1}, 
\begin{eqnarray}
|{\cal S}_4^{(4)} \rangle & = & {1 \over 2 \sqrt {6}} (
| DEFG \rangle - | DEGF \rangle - |
DFEG \rangle + | DFGE \rangle \nonumber \\ & & +
| DGEF \rangle - | DGFE \rangle
- | EDFG \rangle + | EDGF \rangle \nonumber
\\ & & + | EFDG \rangle - | EFGD \rangle -
| EGDF \rangle + | EGFD \rangle
\nonumber \\ & & + | FDEG \rangle - | FDGE
\rangle -
| FEDG \rangle + | FEGD \rangle \nonumber \\
& & + | FGDE \rangle - | FGED \rangle
- |
GDEF \rangle + | GDFE \rangle \nonumber \\
& & + | GEDF \rangle - | GEFD \rangle -
| GFDE \rangle + | GFED \rangle).
\label{s4s4b}
\end{eqnarray}
Therefore, every time Alice, Bob, and Charlie measure the same basis and David measures one element of that basis, the winning condition of the game is satisfied.

The above game can be generalized for any complete KS set in $\mathbbm{C}^d$ as follows.

\begin{proposition}\label{prop1}
For any orthogonality graph with a realization of a complete KS set in $\mathbbm{C}^d$, there exists a $d$-party game that can be perfectly accomplished using $d$-$d$ supersinglet.
\end{proposition}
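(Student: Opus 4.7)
The plan is to generalize the $d=4$ construction above to arbitrary $d\ge 3$ and to an arbitrary complete KS set $K$ in $\mathbbm{C}^d$. The game will involve $d$ players: $d-1$ \emph{basis players} who all receive the same uniformly random $d$-clique of mutually orthogonal projectors of $K$ (a basis), and one \emph{vector player} who receives a single uniformly random vector drawn from that basis. Each basis player outputs an index in $\{0,\ldots,d-1\}$ identifying the vector they assign $1$; the vector player outputs a bit. A round is won iff the $d-1$ basis players' outputs are pairwise distinct and the vector player outputs $1$ exactly when their input coincides with the unique remaining vector of the basis.

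Next I would specify the quantum strategy and verify it wins with certainty. The players share one copy of $|{\cal S}_d^{(d)}\rangle$. Each basis player performs the complete projective measurement $\{|v_i\rangle\langle v_i|\}_{i=0}^{d-1}$ in the received basis $\{|v_i\rangle\}$ and outputs the outcome index; the vector player performs the two-outcome measurement $\{|v\rangle\langle v|,\,\id-|v\rangle\langle v|\}$ and outputs the bit obtained. The key identity, of which Eq.~(\ref{s4s4b}) is the $d=4$ instance, is that for any orthonormal basis $\{|v_0\rangle,\ldots,|v_{d-1}\rangle\}$ of $\mathbbm{C}^d$,
\[
|{\cal S}_d^{(d)}\rangle = \frac{e^{i\phi}}{\sqrt{d!}}\sum_{\sigma\in S_d}\text{sgn}(\sigma)\,|v_{\sigma(0)}v_{\sigma(1)}\cdots v_{\sigma(d-1)}\rangle .
\]
This follows directly from the $U^{\otimes d}$-invariance of the supersinglet: taking $U$ to be the unitary that maps the computational basis to $\{|v_i\rangle\}$ and applying $U^{\otimes d}$ to Eq.~(\ref{SNSN0}) produces the expansion above, with the global phase $e^{i\phi}=\det U$, which is irrelevant for measurement statistics.

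Given this basis-independent Slater form, projecting the $d-1$ basis players onto any $(d-1)$-tuple $|v_{j_0}\cdots v_{j_{d-2}}\rangle$ with pairwise distinct indices leaves the vector player's qudit deterministically in the state $|v_k\rangle$, where $k$ is the unique index missing from $\{j_0,\ldots,j_{d-2}\}$; each such outcome occurs with probability $1/d!$. The vector player's dichotomic measurement then returns $1$ with certainty iff $v=v_k$ and $0$ iff $v$ is one of the other $d-1$ basis elements (which are orthogonal to $v_k$). Hence the winning condition is satisfied in every round, and the strategy is perfect.

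To highlight the non-triviality of the game, I would also note that no perfect classical strategy exists: any deterministic winning strategy would force the vector player's output $f(v)\in\{0,1\}$, a function of $v$ alone, to assign exactly one $1$ per basis of $K$; since in a complete KS set every pair of orthogonal projectors shares a common basis, $f$ would then satisfy both KS conditions (i) and (ii), contradicting the definition of a KS set. The only delicate point in the argument is bookkeeping: tracking the global phase $\det U$ and confirming that the vector player's coarse-grained two-outcome measurement combines correctly with the others' full-basis measurements. Both are routine, so the proof reduces to a direct extension of the $d=4$ example already worked out for the $18$-vector KS set.
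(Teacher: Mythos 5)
Your proposal is correct and follows essentially the same route as the paper: the same game (with $d-1$ basis players and one vector player), the same classical-impossibility argument via completeness of the KS set, and the same quantum strategy whose perfection rests on the basis-independent Slater form of $\ket{\mathcal{S}_d^{(d)}}$. You are in fact slightly more careful than the paper in noting the global phase $\det U$ in the $U^{\otimes d}$-invariance, which the paper's Eq.~(2) suppresses; this does not change anything of substance.
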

\begin{proof}
Consider an orthogonality graph that realizes a complete KS set of vectors. Let the graph have $n$ vertices and $m$ contexts (or $d$-cliques). The vectors corresponding to each context form a basis. For any natural number $k$, here we use the notation of the set $[k]:= \{0,1,\ldots,k-1\}$. Let us denote the contexts as $C_x$, where $x \in [m],$ and each context $C_x$ comprising $d$ elements from $[n]$ such that the corresponding vectors form a basis. For instance, if $C_x = \{s_1,\ldots,s_d\}$, where $s_i\in [n]$, then $\{\ket{\psi_{s_i}}\}_{i=1}^d$ forms a basis. 

In the game, a context is randomly drawn from and given as input to the first $(d-1)$ parties. This means that the first $(d-1)$ parties receive the same input $x \in [m]$. For simplicity, their combined inputs are denoted as $\overline{x} = (x,\ldots,x)$.
The output of the $i$-th party (among these $(d-1)$ parties), denoted by $a_i$, corresponds to one of the vertices of the graph $a_i \in [n]$. Collectively, their outputs are represented as $\Vec{a} = (a_0, \ldots, a_{d-2})$. Note that the initial party is labeled as the zeroth party. The last party receives an input $y \in C_x$ and produces a binary output $b \in \{0,1\}$. The winning condition given the inputs $\overline{x},y$ is such that: $(1)$ the outputs of the first $(d-1)$ parties must belong to $C_x$, that is, $a_i\in C_x$ and be distinct, and $(2)$ the output of the last party must be $b=0$ if his input $y$ matches one of the outputs of other parties; otherwise $b=1$. Mathematically, the winning condition is expressed as: for any $x$ and $y\in C_x,$
\begin{equation}\label{eq:optcond}
\sum_{\Vec{a} \in \mathcal{P}(C_x \setminus \{y\})} p(\Vec{a},1|\overline{x},y) + \sum_{\substack{k\in C_x \\ k\neq y}} \sum_{\Vec{a} \in \mathcal{P}(C_x \setminus \{k\})} p(\vec{a},0|\overline{x},y) = 1,
\end{equation}
where $\mathcal{P}(\{\cdot\})$ denotes all possible permutations of the set $\{\cdot\}$ and $\mathcal{P}(C_x \setminus \{k\})$ refers to all possible permutations of the elements of $C_x$ excluding $k$.

The best classical strategies for all the parties can be assumed to be deterministic. In such a strategy, the last party assigns binary values 0 or 1 to all the vertices. Let $v(y)\in \{0,1\}$ be the value assigned to vertex $y$ by the last party. Since the last part does not know the input of other parties, this value assignment must be independent of the input $x$ of the other parties, that is, the context in which $y$ belongs. Without loss of generality, we can assume deterministic output strategies for other parties as well. In each context $C_x,$ there are $d$ vertices, and all other $(d-1)$ parties should produce distinct outputs according to the winning condition $(1)$. This means that for any pair of vertices, say, $y$ and $y'$, at least one of them must be the output of one of these $(d-1)$ parties. Consequently, $v(y)$ and $v(y')$ cannot be 1 simultaneously; otherwise, the winning condition $(2)$ would be violated. Moreover, in a graph that realizes a complete KS set, every pair of connected vertices belongs to at least one context. This requirement forces the condition that no two connected vertices can both be assigned the value 1. Also, to meet the winning conditions, at least one vertex from each context must be assigned the value 1. Consequently, the value assignment must simultaneously satisfy (i) and (ii), which is inherently contradictory for any KS set.

In the quantum strategy, the parties share $\ket{\mathcal{S}_d^{(d)}}$. The first $(d-1)$ parties measure in the basis corresponding to the context $C_x$ for their input $x$, while the last party measures the rank-one projector associated with the vertex $y$. Due to \eqref{prop1}, the condition in \eqref{eq:optcond} is satisfied for all $x,y\in C_x$, ensuring that the winning condition of the game is perfectly achieved. 
\end{proof}

%%%%%%%%%%%%%%%%%%%%%%%%%%%%%%%%%%%%%%%%%%%%%%%%%%%%%%%%%%%%%%%%%%%

\section{Self-testing of Supersinglets}

%%%%%%%%%%%%%%%%%%%%%%%%%%%%%%%%%%%%%%%%%%%%%%%%%%%%%%%%%%%%%%%%%%%

Self-testing \cite{Yao_self,Supic:2020Q} is a method to prove that certain source is actually preparing a specific quantum state using solely from the input-output statistics of a Bell inequality experiment. It is based on the observation that, modulo local isometries, certain input-output statistics can only be produced by a certain state. 
Consider an $d$-party Bell experiment, where we observe the input-output statistics $p(\Vec{a}|\Vec{x})$, with $\Vec{x}=(x_0,x_1,\ldots,x_{d-1})$ representing the measurement settings and $\Vec{a}=(a_0,a_1,\ldots,a_{d-1})$ denoting the measurement outcomes. These outcomes arise from an unknown state $\ket{\psi}$ and unknown local measurements $\{A_{a_i|x_i}\}_{a_i,x_i}$, where $i$ labels the $i$-th party, with the initial party designated as the zeroth party. The local subsystems involved are of arbitrary dimensions. Bell self-testing of a reference state $\ket{\overline{\psi}}$ and local measurements $\{\overline{A}_{a_i|x_i}\}_{a_i,x_i}$ asserts the existence of unitary operators $U_i$ acting on the Hilbert space of the $i$-th party, such that 
\begin{equation} \label{st} \forall i, \ U_iA_{a_i|x_i}U_i^\dagger = \overline{A}_{a_i|x_i} \otimes \id ,
\end{equation}
and
\begin{equation}
\left(\otimes_{i} U_i \right) \ket{\psi} = \ket{\overline{\psi}} \otimes \ket{\text{aux}}. 
\end{equation} 
The auxiliary state $\ket{\text{aux}}$ does not contribute to the observed statistics. From a mathematical standpoint, the observed statistics imply that the unknown quantum state and measurements must have a unique representation. Hence the name ``self-testing''.

%(when the additional orthogonalities $(1,A)$, $(2,B)$, etc are taken into account).. It is worth noting that these additional orthogonality conditions result in a modified orthogonality graph, rendering the set no longer a complete set. 

While all pure bipartite entangled states in arbitrary dimensions can be self-tested within the Bell scenario \cite{Coladangelo:2017NC,bamps_pra2015,sarkar2019self}, extending Bell self-testing to multipartite qudit states remains a challenge. To date, successful self-testing has been limited to specific states constrained by certain dimensions or classifications \cite{mckague2011self,supic_2018,baccari2018scalable,SarkarGHZ,Santos_2023,Panwar2023,Ranendu:PRA2024,balanz:2024XXX}. To the best of our knowledge, supersinglets $\ket{\mathcal{S}_d^{(d)}}$ has yet to be addressed in the context of self-testing. Here, we bridge this gap by introducing a novel technique that leverages local measurements of rigid KS sets to enable self-testing for this class of states.

For that, we first need to focus on a particular subset of the perfect quantum strategies introduced in the previous section: those that use KS sets that are complete and rigid.

A KS set $\{\ket{\overline{\psi_i}}\}_{i=0}^{n-1}$, which belongs to $\mathbbm{C}^d$, with $d \ge 3$ and satisfies the orthogonality and completeness conditions according to an orthogonality graph $\mathcal{G}$ (in which nodes represent vectors and edges indicate which ones are mutually orthogonal), is said to be {\em rigid} if any other set of projectors $\{\Pi_i\}_{i=0}^{n-1}$ that satisfies the same orthogonality and completeness relations dictated by $\mathcal{G}$ and belonging to an arbitrary (but finite) dimensional Hilbert space $\mathbbm{C}^D$, with $D \ge d$, can be related to the reference KS set by a unitary operator $U$ such that, for all $i$, 
\begin{equation} \label{UnRep} 
U \Pi_i U^\dagger = \ket{\overline{\psi_i}}\!\bra{\overline{\psi_i}} \otimes \id, 
\end{equation} 
where $\id$ denotes the identity.
Among the numerous KS sets identified to date, only a few of them are proven to be rigid. Even well-known small KS sets turn out not to be rigid \cite{Xu:2024PRL}. The 18-vector set in Fig.~\ref{fig1} is rigid \cite{Xu:2024PRL}.
The 31-vector KS set in $\mathbbm{C}^3$, proposed by Conway and Kochen \cite{Peres:1993}, has been recently proven to be rigid \cite{Trandafir:2024b}.
 
We present our two main results concerning self-testing. The first establishes a connection between the rigidity of KS sets and the self-testing of measurements performed by local parties in a perfect quantum strategy (PQS). The second provides the self-testing of $d$-party $d$-level supersinglet using the first result along with the fact that for every dimension $d\geqslant 4$, there exists a rigid KS set \cite{Xu:2024PRL}.

Consider a KS set consisting of \( n \) vectors, with its orthogonality structure represented by a graph \(\mathcal{G}\). If the KS set is not complete, it can be extended by adding additional vectors to form a complete KS set, whose orthogonality graph we denote by \( \mathcal{G}_c \), an extension of the graph \(\mathcal{G}\). According to Proposition \ref{prop1}, there exists a PQS based on this extended graph \( \mathcal{G}_c \). In the PQS, we define the uncharacterized measurements corresponding to the first $(d-1)$ parties as \( \{A_{a_i|x_i}\}_{a_i,x_i} \), with $i$ levels the $i$-th party. The uncharacterized measurement of the last party is denoted by the binary-outcome measurement, \( \{I - B_y, B_y\}_y \), corresponding to the outcomes 0 and 1, respectively. Here, \( a_i \) and $y$ belong to the set of vertices of $\mathcal{G}_c$ and \( x_i \) belongs to the set of \( d \)-cliques (or contexts) of $\mathcal{G}_c$. Let the unknown state shared between the parties be $\ket{\psi}$.

We now formally state the results.

\begin{theorem} \label{thm:st-mea}
A KS set $\{\ket{v_y}\}_{y=0}^{n-1}$ with graph $\mathcal{G}$ is rigid according to \eqref{UnRep}, if and only if, the perfect quantum strategy defined by an extended $\mathcal{G}_c$ self-tests the measurements corresponding to $\mathcal{G}$, that is, for each party there exists local unitary $U_i,$ such that for all $a_i \in [n]$ and $i \in [d-1]$,
\begin{equation}\label{Aks-st}
    U_iA_{a_i|x_i}U_i^\dagger = 
    \begin{cases}
\ket{v_{a_i}}\!\bra{v_{a_i}} \otimes \id, \text{ if } a_i \in C_{x_i}, \\
\mathbbm{O}, \quad \text{ otherwise,}
    \end{cases}
\end{equation} 
and
\begin{equation}\label{Bks-st}
    U_{d-1} B_y U_{d-1}^\dagger = \ket{v_y}\!\bra{v_y} \otimes \id, \ \
    \forall y \in [n].
\end{equation}
\end{theorem}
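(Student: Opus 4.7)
The theorem is an iff, so both directions must be proved. My plan is: for ($\Rightarrow$), to extract from the perfect winning of the PQS a consistent per-party realization of the orthogonality and completeness of $\mathcal{G}_c$, then invoke rigidity on the restriction to $V(\mathcal{G})$; for ($\Leftarrow$), to turn any putative realization of the KS relations of $\mathcal{G}$ into a perfect PQS strategy and apply the self-testing hypothesis.

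For direction ($\Rightarrow$), I would first translate the zero-probability losing outcomes of \eqref{eq:optcond} into operator identities on $\ket{\psi}$. Summing the winning equation over the unconstrained outputs yields, for each party $i \in \{0,\ldots,d-2\}$ and each context input $x$: (a) $A^{(i)}_{a|x}$ annihilates $\ket{\psi}$ for $a \notin C_x$, so WLOG $A^{(i)}_{a|x} = \mathbbm{O}$ there; (b) $\{A^{(i)}_{a|x}\}_{a\in C_x}$ is a projective measurement on party $i$'s reduced support; and (c) for $a \in C_x \cap C_{x'}$, the operators $A^{(i)}_{a|x}$ and $A^{(i)}_{a|x'}$ agree on the support. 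Item (c) holds because the winning condition enforces perfect correlation between party $i$'s output $a$ and party $d-1$'s binary output $b=0$ on input $a$, irrespective of the context input $x$ to party $i$; this correlation pins down the same subspace of party $i$'s state-supporting subspace regardless of $x$. Defining $\Pi^{(i)}_a := A^{(i)}_{a|x}$ for any $x \ni a$ then yields a family $\{\Pi^{(i)}_a\}_{a \in V(\mathcal{G}_c)}$ realizing the full orthogonality and completeness of $\mathcal{G}_c$ on party $i$'s Hilbert space. Its restriction to $V(\mathcal{G})$ satisfies the KS relations of $\mathcal{G}$, and rigidity \eqref{UnRep} supplies $U_i$ with $U_i \Pi^{(i)}_a U_i^\dagger = \ket{v_a}\!\bra{v_a} \otimes \id$ for $a \in V(\mathcal{G})$, which is \eqref{Aks-st}. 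A parallel support analysis applied to the binary measurements $\{B_y\}$ of party $d-1$ yields $U_{d-1}$ satisfying \eqref{Bks-st}.

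For direction ($\Leftarrow$), given an arbitrary realization $\{\Pi_y\}_{y=0}^{n-1}$ on $\mathbb{C}^D$ of the orthogonality and completeness of $\mathcal{G}$, I would extend it to a realization $\{\Pi_y\}_{y \in V(\mathcal{G}_c)}$ of $\mathcal{G}_c$ on $\mathbb{C}^D$: for each added vertex, the new projector must lie in the orthogonal complement of the existing projectors prescribed by the edges of $\mathcal{G}_c$, a subspace whose dimension grows with $D/d$ exactly as in the reference completion, so consistent extensions always exist. Setting $A^{(i)}_{a|x} = \Pi_a$ for $a \in C_x$ (and $\mathbbm{O}$ otherwise) and $B_y = \Pi_y$ gives a candidate strategy for the PQS on $\mathcal{G}_c$; one then exhibits a state $\ket{\psi'} \in (\mathbb{C}^D)^{\otimes d}$ for which this strategy wins perfectly. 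With $\ket{\psi'}$ in hand, the self-testing hypothesis immediately supplies $U_i$ satisfying \eqref{Aks-st} and \eqref{Bks-st}; restricting to $y \in V(\mathcal{G})$ gives $U_i \Pi_y U_i^\dagger = \ket{v_y}\!\bra{v_y} \otimes \id$, which is \eqref{UnRep}.

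The principal obstacle is the existence of $\ket{\psi'}$ in direction ($\Leftarrow$). Conjugating the supersinglet by the sought unitary is circular. The state must be built directly from $\{\Pi_y\}$; the natural construction is a \emph{generalized supersinglet}, defined as the antisymmetric $d$-fold combination on $(\mathbb{C}^D)^{\otimes d}$ of rank-$1$ unit vectors drawn from the ranges of a chosen context's projectors, with context-independence of the result verified by combining the invariance $U^{\bigotimes d}\ket{\mathcal{S}_d^{(d)}} = \ket{\mathcal{S}_d^{(d)}}$ with the KS basis-change unitaries between contexts. The context-consistency in item (c) of direction ($\Rightarrow$) is the second delicate step; it should follow from the operator identities derived in (a)-(b) together with the fact that the supersinglet on party $d-1$'s side uniquely pins down (via correlations) the local projector that effects the output $a$, but verifying this rigorously requires careful operator-algebraic bookkeeping.
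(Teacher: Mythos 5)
Your treatment of the substantive direction (rigid $\Rightarrow$ self-testing) follows the same route as the paper: from the zero-error condition \eqref{eq:optcond} one extracts that each party's operators vanish outside the given context, are projective and complete within it, and are context-independent, so that they constitute a realization of the KS relations of $\mathcal{G}_c$ to which rigidity \eqref{UnRep} can be applied. Your step (c), which you rightly flag as delicate, is handled in the paper exactly in the spirit you describe: the reduced states $\sigma_y=\tr_{d-1}(\rho B_y)$ depend only on the last party's input $y$, and the winning condition forces the permutation-summed operator $A_{x,y}$ to be the projector onto the support of $\sigma_y$, hence independent of $x$; the paper then descends to the individual $A_{a_i|x_i}$. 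So for this direction your proposal and the paper coincide in approach.

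For the converse direction the paper argues contrapositively in two sentences: if two inequivalent realizations of $\mathcal{G}$ exist, the two corresponding perfect strategies are not unitarily related, so self-testing fails. Your direct version makes explicit the assumption that this argument silently relies on, namely that \emph{any} realization $\{\Pi_y\}$ of the (completed) KS relations on $\mathbb{C}^D$ admits a state $\ket{\psi'}$ winning the game with certainty. Here your proposed construction does not go through as stated. Antisymmetrizing unit vectors $\ket{w_i}\in\mathrm{range}(\Pi_{s_i})$ drawn from one context $C_0$ produces a state supported on $W^{\otimes d}$ with $W=\mathrm{span}\{\ket{w_i}\}$, and the strategy is indeed perfect \emph{for that context} (each $\Pi_{s_i}$ acts as $\ket{w_i}\!\bra{w_i}$ on $W$). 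But for a different context $C_1$ the projectors $\Pi_{t_j}$ need not preserve $W$, and their compressions $P_W\Pi_{t_j}P_W$ need not be rank-one projectors forming a basis of $W$; the ``basis-change unitaries between contexts'' you invoke exist for the reference realization but not for an arbitrary one --- which is precisely the situation the $\Leftarrow$ direction must address, since there one cannot assume the realization has the product form $\ket{v_y}\!\bra{v_y}\otimes\id$. The same remark applies to your claimed extension of an arbitrary realization of $\mathcal{G}$ to one of $\mathcal{G}_c$. To be fair, the paper does not supply this construction either; but as written your argument for this direction has a concrete step that would fail, whereas everything else in the proposal is sound and aligned with the paper.
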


The theorem stated above establishes a one-to-one correspondence between the rigidity of the original KS set (associated with the graph \(\mathcal{G}\)) and the self-testing of the subset of measurement operators \( \{A_{a_i|x_i}\} \) and \( \{B_y\} \), where \( a_i, y \in [n] \), corresponding to the vertices of \( G \).
Note that even if the original KS set is rigid, its extension to a complete KS set may not retain this rigidity. In such cases, the PQS enables the self-testing of the subset of measurements corresponding to the original graph \(\mathcal{G}\). The proof is given in Appendix~\ref{app1}.

\begin{theorem} \label{thm:st-ss}
 There exists a KS set for every Hilbert space ${\cal H}=\mathbbm{C}^d$ of finite $d \ge 3$ such that the corresponding perfect quantum strategy self-tests the $d$-party $d$-level supersinglet.
\end{theorem}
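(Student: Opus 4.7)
The plan is to combine Theorem~\ref{thm:st-mea} with the existence of rigid KS sets in every dimension $d\ge 3$, and then bootstrap a measurement self-test into a state self-test. First, I would invoke the known rigid KS sets: the 18-vector set in $d=4$~\cite{Xu:2024PRL}, the Conway--Kochen 31-vector set in $d=3$~\cite{Trandafir:2024b}, and the constructions of~\cite{Xu:2024PRL} in the remaining dimensions; in each case, complete the KS set if necessary (allowed because Theorem~\ref{thm:st-mea} only demands rigidity of the original subgraph $\mathcal{G}$, not of the full extension $\mathcal{G}_c$). Theorem~\ref{thm:st-mea} then supplies local unitaries $U_0,\ldots,U_{d-1}$ under which every measurement operator corresponding to a vertex of $\mathcal{G}$ becomes the corresponding KS projector tensored with the identity, as in~\eqref{Aks-st}--\eqref{Bks-st}.

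Next, I would pin down the state $\ket{\tilde\psi}:=(\otimes_i U_i)\ket{\psi}$. Because the self-tested measurements act trivially on the auxiliary factor of each party, I can fix any context $C_x=\{s_0,\ldots,s_{d-1}\}$ of $\mathcal{G}$ and expand
\begin{equation*}
\ket{\tilde\psi} = \!\!\sum_{a_0,\ldots,a_{d-1}\in C_x}\!\! \ket{v_{a_0}\cdots v_{a_{d-1}}} \otimes \ket{\chi^{(x)}_{a_0\cdots a_{d-1}}}.
\end{equation*}
Applying the perfect-winning equation~\eqref{eq:optcond} for every input $y\in C_x$ encodes a family of zero-probability constraints on the non-winning outcomes. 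A direct book-keeping argument---mirroring the classical impossibility reasoning in the proof of Proposition~\ref{prop1}, but now at the level of amplitudes---forces $\ket{\chi^{(x)}_{a_0\cdots a_{d-1}}}=0$ unless $(a_0,\ldots,a_{d-1})$ is a permutation of $C_x$, with the last party's label being exactly the element of $C_x$ missing from the first $d-1$.

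The main obstacle is to upgrade this within-basis structure into the full Levi-Civita pattern $\ket{\chi^{(x)}_\sigma}=\operatorname{sgn}(\sigma)\ket{\eta}/\sqrt{d!}$ with a single auxiliary vector $\ket{\eta}$ independent of~$x$. For this, I would move between neighbouring contexts of~$\mathcal{G}$: given $C_x$ and a second context $C_{x'}$ that shares vertices with $C_x$, the unshared vertices of $C_{x'}$ and of $C_x$ span the same orthogonal subspace and are related by a nontrivial local unitary. Re-expanding $\ket{\tilde\psi}$ in the $C_{x'}$-basis and demanding that it again be supported only on permutations of $C_{x'}$ forces antisymmetry of the amplitudes under transpositions of any two unshared indices and simultaneously identifies $\ket{\chi^{(x)}_\sigma}$ with $\ket{\chi^{(x')}_{\sigma'}}$ up to the expected sign. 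Because the contexts of a complete KS set are linked together by a rich web of such shared-vertex moves---the same combinatorial richness responsible for the classical KS contradiction---these local identifications propagate to every context, yielding a common $\ket{\eta}$ and the full antisymmetry pattern. Collecting everything gives $\ket{\tilde\psi}=\ket{\mathcal{S}_d^{(d)}}\otimes\ket{\eta}$, which is the self-testing claim for the $d$-party $d$-level supersinglet.
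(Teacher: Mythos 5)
Your overall architecture coincides with the paper's: invoke Theorem~\ref{thm:st-mea} together with rigid KS sets in every dimension, use the canonical-basis context to restrict the support of the corrected state to permutations of $[d]$, and then extract the Levi-Civita sign pattern from the zero-probability constraints of other contexts. The first two stages are fine (your $\ket{\chi^{(x)}_{a_0\cdots a_{d-1}}}$ bookkeeping is equivalent to the paper's reduction to a pure $\ket{\overline{\psi}}$ of the form \eqref{simpsi}).

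The gap is in your final paragraph. You assert that the antisymmetry relations obtained from pairs of overlapping contexts ``propagate to every context'' because of ``the same combinatorial richness responsible for the classical KS contradiction.'' That inference is not valid: the KS property concerns $0/1$ colorability and implies nothing about whether the chosen complete rigid KS set contains the canonical basis, whether its contexts overlap the canonical one in $d-2$ (or suitably block-diagonal) positions, or---crucially---whether the transpositions of labels induced by those overlaps generate the full symmetric group $S_d$. Without the generation property you only obtain antisymmetry under a subgroup of $S_d$, which does not pin down the supersinglet. This is precisely where the paper does concrete work: for $d=3$ it uses two specific bases of the Conway--Kochen set whose induced transpositions $(1\,2)$ and $(0\,2)$ generate $S_3$; for $d=4$ it solves an explicit $23\times 24$ linear system coming from the two bases $\{v_4,\dots,v_7\}$ and $\{v_8,\dots,v_{11}\}$ of the Peres-24 set; and for $d>4$ it deliberately builds the KS set by merging shifted Peres-24 blocks so that the bases \eqref{b1k}--\eqref{b2k} realize transpositions of consecutive labels for every $k$, which do generate $S_d$ and allow the recursive propagation \eqref{al1234p}--\eqref{als12345}. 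A second, smaller imprecision: a single neighbouring context forces antisymmetry under the transposition of the two unshared labels only when the contexts differ in exactly two vertices and the $2$-dimensional rotation is nontrivial; for contexts differing in more vertices (as in Peres-24) the constraints form a coupled linear system that must actually be solved. To repair your proof you must exhibit, for each $d$, a concrete rigid KS set and verify that its contexts supply a generating set of transpositions---which is essentially what the paper's Appendix~\ref{app2} does.
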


Theorem~\ref{thm:st-ss} establishes that, for every finite $d \geq 3$, there exists a PQS that certifies the unknown state $\ket{\psi}$ to be equivalent to the supersinglet $\ket{\mathcal{S}_d^{(d)}}$, up to local unitaries. The proof is carried out in two steps. The first step relies on two key elements: the existence of rigid KS sets in every dimension \cite{Xu:2024PRL,Trandafir:2024b} and Theorem~\ref{thm:st-mea}. Together, these two results allow for the self-testing of local measurements associated with these rigid KS sets. In the second step, this self-testing of measurements is used to establish the self-testing of the supersinglet in every dimension $d\geq 3$. The detailed proof is presented in Appendix~\ref{app2}.

%%%%%%%%%%%%%%%%%%%%%%%%%%%%%%%%%%%%%%%%%%%%%%%%%%%%%%%%%%%%%%%%%%%

\section{Conclusions and challenges}

%%%%%%%%%%%%%%%%%%%%%%%%%%%%%%%%%%%%%%%%%%%%%%%%%%%%%%%%%%%%%%%%%%%

We intuited that supersinglets were special in quantum theory and quantum information. Here we have made progress in turning this intuition into a proof. We have seen that, indeed, the nonlocality of the supersinglets is special: it is $d$-partite, $d$-dimensional, maximal (in fact, it is ``perfect'', since it corresponds to nonlocal fraction unity \cite{Elitzur:1992PLA}), and, as is the case for Greenberger-Horne-Zeilinger and graph states \cite{mckague2011self}, provides a distinctive signature of the supersinglets. We believe that the quantum-to-classical separation in the perfect quantum strategy is small, in general. For instance, by considering all possible local deterministic strategies, we find that the maximum success probability of the game in the classical setting is 35/36 for the 18-vector set \cite{Cabello:1996PLA} and 59/60 for the 24-vector set \cite{Peres:1991JPA}. Therefore, developing a robust self-testing scheme for supersinglets remains an interesting direction for future work.

The method used to prove it, based on locally measuring rigid KS sets, is interesting in itself. It shows that KS sets have more applications that what is usually appreciated. It is worth investigating whether the same approach can be used to self-test a more general class of multipartite high-dimensional states. For example, it seems clear that, using the same the strategy, rigid KS sets allow us to self-test any $N$-partite high-dimensional state in which, for every bipartition with $N-1$ parties on one partition and one party on the other partition, the $N-1$ parties can predict with certainty the value of all the observables of the KS set corresponding to the other party.

Finally, we also have to see to what extent the entanglement of supersinglets is also special and we have to convince the experimentalists that it is worthwhile to prepare supersinglets. The technology is already available. For example, the $3$-$3$-supersinglet can be prepared with three trapped ions \cite{Wilhelm:2023NC}. For the $4$-$4$-supersinglet, we could use a crosstalk-free processor with eight superconducting qubits \cite{Chen:2022PRL}.
Let us hope that all this and similar efforts will contribute to stimulating the experimental generation of these beautiful states.

%%%%%%%%%%%%%%%%%%%%%%%%%%%%%%%%%%%%%%%%%%%%%%%%%%%%%%%%%%%%%%%%%%%

\subsection*{Acknowledgment}

%%%%%%%%%%%%%%%%%%%%%%%%%%%%%%%%%%%%%%%%%%%%%%%%%%%%%%%%%%%%%%%%%%%

We acknowledge useful conversations with Gabriele Cobucci, Armin Tavakoli, and Stefan Trandafir. This work was supported by the project STARS (Project No.\ STARS/STARS-2/2023-0809) funded by the Government of India, and the EU-funded project \href{10.3030/101070558}{FoQaCiA} and the \href{10.13039/501100011033}{MCINN/AEI} (Project No.\ PID2020-113738GB-I00).

%%%%%%%%%%%%%%%%%%%%%%%%%%%%%%%%%%%%%%%%%%%%%%%%%%%%%%%%%%%%%%%%%%%

\onecolumngrid
\appendix

%%%%%%%%%%%%%%%%%%%%%%%%%%%%%%%%%%%%%%%%%%%%%%%%%%%%%%%%%%%%%%%%%%%

\section{Proof of Theorem \ref{thm:st-mea}}\label{app1}

%%%%%%%%%%%%%%%%%%%%%%%%%%%%%%%%%%%%%%%%%%%%%%%%%%%%%%%%%%%%%%%%%%%

\begin{proof}
The contrapositive statement of the direct implication of this theorem is straightforward to establish. Assume that there exist two complete KS sets associated with the same orthogonality graph $\mathcal{G}$, but these sets are not related by a unitary transformation. Now, consider two PQS realizations corresponding to these two distinct sets of measurements. Since a unitary transformation does not connect the measurements, it follows that the local measurements do not exhibit self-testing properties.

To show the reverse implication of the theorem, it suffices to establish that in any perfect quantum strategy, the last party's measurements $\{\id-B_y, B_y\}$ and the first $(d-1)$ parties' measurements $\{A_{a_i|x_i}\}_{a_i,x_i}$, with $i\in [d-1]$, must be projective and form a KS set according to graph $\mathcal{G}$. If this holds, then these uncharacterized measurements must be unitarily equivalent to the KS set whenever the KS set is \textit{rigid}.

Let $\rho$ be the shared state in the strategy, and $d_i$ be the local dimension of the reduced state for the $i$-th party. Here, the local dimensions $d_i$ can be arbitrary and unknown. Our aim is to characterize the POVMs for every party that acts on the local support of $\rho$.

We define $\rho_{\Vec{a}|\overline{x}} = \tr_{[d-2]}(\rho \bigotimes_{i=0}^{d-2} A_{a_i|x})$, which is the unnormalized reduced state on the $(d-1)$-th party when outcome $\Vec{a}$ is observed by the others for the measurement setting $x$. Here, $\tr_{[d-2]}$ refers to tracing out the subsystems of the first $(d-1)$ parties. The condition $\sum_{\Vec{a}} \bigotimes_{i=0}^{d-2} A_{a_i|x} = \id$ implies that
\begin{equation}\label{eq:sum1}
\sum_{\Vec{a}} \tr_d(\rho_{\Vec{a}|\overline{x}}) = \tr_d \left(\sum_{\Vec{a}} \rho_{\Vec{a}|\overline{x}}\right) = 1.
\end{equation}
The condition for the perfect strategy given by Eq.~\eqref{eq:optcond} translates to
\begin{equation}\label{eq:sum2}
\sum_{\Vec{a} \in \mathcal{P}(C_x \setminus \{y\})} \tr(\rho_{\Vec{a}|\overline{x}} B_y) + \sum_{\substack{k\in C_x \\ k\neq y}} \sum_{\Vec{a} \in \mathcal{P}(C_x \setminus \{k\})} \tr\left[\rho_{\Vec{a}|\overline{x}} (\id-B_y)\right]=1.
\end{equation}
We define $S_{x,y}$ as the subspace spanned by the operator 
\begin{equation}
 \rho_{x,y} = \sum_{\Vec{a} \in \mathcal{P}(C_x \setminus \{y\})} \rho_{\Vec{a}|\overline{x}},
\end{equation} 
and $\tilde{S}_{x,y}$ as the subspace spanned by the operator 
\begin{equation}
 \tilde{\rho}_{x,y} = \sum_{\substack{k\in C_x \\ k\neq y}} \sum_{\Vec{a} \in \mathcal{P}(C_x \setminus \{k\})} \rho_{\Vec{a}|\overline{x}}.
\end{equation} 
Note that $S_{x,y}$ and $\tilde{S}_{x,y}$ depends only on the inputs $x,y$. Consequently, we can re-express Eq.~\eqref{eq:sum1} as 
\begin{equation}\label{eq:sum3}
\tr (\rho_{x,y}) + \tr(\tilde{\rho}_{x,y}) = 1.
\end{equation}
By combining Eqs.~\eqref{eq:sum2} and \eqref{eq:sum3}, we get
\begin{equation}
\label{aeq}
 \tr\left( \rho_{x,y} (\id - B_y) \right)
+ \tr\left( \tilde{\rho}_{x,y} B_y \right) = 0.
\end{equation} 
Since $\mathbbm{O} \preceq B_y \preceq \id$, it follows from Eq.~\eqref{aeq} that the restriction of $B_y$ to the subspace $S_{x,y}$ is the identity operator, and the restriction of $B_y$ to the subspace $\tilde{S}_{x,y}$ is zero. Moreover, because 
$\tilde{S}_{x,y}\oplus S_{x,y}$ spans the entire space of the $(d-1)$-th party's system, we have
\begin{equation}\label{ByP}
B_y = \id_{x,y},
\end{equation}
where $\id_{x,y}$ stands for the identity or projection operator on $S_{x,y}$.

Given the same $x$, for a different input $y'\in C_x$, we similarly find that $B_{y'} = \id_{x,y'}$. Notably, the possible sets of outcomes $\Vec{a}$ that belongs to $\mathcal{P}(C_x\setminus \{y'\})$ and $\mathcal{P}(C_x\setminus \{y\})$ are disjoint, implying that the subspaces $S_{x,y'}$ and $S_{x,y}$ are orthogonal. This leads to the relation
\begin{equation}\label{Byortho}
B_y B_{y'} = \mathbbm{O}, 
\end{equation}
for all $y,y' \in C_x$. Furthermore, Eq.~\eqref{eq:sum3} ensures that $\bigoplus_{y\in C_x} S_{x,y}$ spans the entire space of the last party's subsystem. This gives us the completeness relation
\begin{equation}\label{Bysum}
\sum_{y\in C_x} B_y = \id,
\end{equation}
where $\id$ is the identity operator acting on the entire space of the $(d-1)$-th party's subsystem.
The combined results of Eqs.~\eqref{ByP}, \eqref{Byortho}, and \eqref{Bysum} show that $\{B_y\}$ must form a KS set of projectors satisfying the orthogonality relations according to graph $\mathcal{G}$. Thus, if the KS set is rigid, then \eqref{Bks-st} holds.

A similar analysis applies to the first $(d-1)$ parties. Let us denote the combined reduced states of these parties as $\sigma_y = \tr_{d-1}(\rho B_y), \tilde{\sigma}_y = \tr_{d-1}(\rho (\id-B_y))$, for the measurement setting $y$, such that
\begin{equation} \label{sumtrA}
\tr(\sigma_y) + \tr(\tilde{\sigma}_y) = 1.
\end{equation}
The winning condition in Eq.~\eqref{eq:optcond} implies that
\begin{equation}
 \label{Aaxsum1}
\sum_{\Vec{a} \in \mathcal{P}(C_x \setminus \{y\})} \tr(\sigma_y \bigotimes_i A_{a_i|x}) 
+ \sum_{\substack{k \in C_x \\ k \neq y}} \sum_{\Vec{a} \in \mathcal{P}(C_x \setminus \{k\})} 
 \tr(\tilde{\sigma}_y \bigotimes_i A_{a_i|x}) = 1.
\end{equation}
Before moving forward, let us define the following operators for our convenience:
\begin{align} \label{Axydef}
 A_{x,y} &= \sum_{\Vec{a} \in \mathcal{P}(C_x \setminus \{y\})} \left(\bigotimes_i A_{a_i|x} \right), \\ \label{tildeAxydef}
 \tilde{A}_{x,y} &= \sum_{\substack{k \in C_x \\ k \neq y}} \sum_{\Vec{a} \in \mathcal{P}(C_x \setminus \{k\})} \left( \bigotimes_i A_{a_i|x} \right).
\end{align} 
These operators depend only on inputs $x$ and $y$.
By substituting the right-hand side of Eq.~\eqref{Aaxsum1} with the expression from Eq.~\eqref{sumtrA}, we arrive at the following relation:
\begin{equation} \label{Axysum1}
\tr \left(\sigma_y \left( \id - A_{x,y} \right) \right) + \tr \left[\tilde{\sigma}_y \left( \id - \tilde{A}_{x,y} \right) \right] = 0.
\end{equation}
Given that $\mathbbm{O} \preceq A_{x,y} \preceq \id$ and $\mathbbm{O} \preceq \tilde{A}_{x,y} \preceq \id$, it follows from Eq.~\eqref{tildeAxydef} that
\begin{eqnarray}
 & A_{x,y}|_{\sigma_y} = \id, \label{Asum1} \\
 & A_{x,y}|_{\tilde{\sigma}_y} = \mathbbm{O}, \label{Asum2} \\ 
 & \tilde{A}_{x,y}|_{\tilde{\sigma}_y} = \id, \label{Asum3}
\end{eqnarray}
where $A|_\sigma$ refers to the restriction of operator $A$ to the subspace spanned by $\sigma_y$. Since $\sigma_y$ and $\tilde{\sigma_y}$ cover the entire support of the reduced state of $(d-1)$-party system, Eqs.~\eqref{Asum1} and \eqref{Asum2} imply
\begin{equation} \label{Axyid}
A_{x,y} = \id^{[d-1]}_{x,y},
\end{equation}
with $\id^{[d-1]}_{x,y}$ representing the identity or projection operator acting on the subspace spanned by $\sigma_y,$ and 
\begin{equation}\label{AxyAtxy}
    A_{x,y} + \tilde{A}_{x,y} = \id^{[d-1]}
\end{equation}
where $\id^{[d-1]}$ is the identity operator on the subspace spanned by the reduced state of $(d-1)$ parties' system. 
Replacing $A_{x,y}$ from \eqref{Axydef} in Eq.~\eqref{Axyid}, we find that for every $i$-th party,
\begin{equation} \label{Ayid}
 A_{a_i|x_i} = \id^{(i)}_{x,y},
\end{equation}
where $\id^{(i)}_{x,y}$ denotes the identity or projection operator onto the subspace $i$-th party's subsystem when the reduced state of the combined $(d-1)$ parties is $\sigma_y.$ Furthermore, substituting $A_{x,y}$ and $\tilde{A}_{x,y}$ from \eqref{Axydef} and \eqref{tildeAxydef} in Eq.~\eqref{AxyAtxy}, we get for each $i\in [d-1]$,
\begin{equation} \label{AaiCom}
    \sum_{a_i\in C_{x_i}} A_{a_i|x_i} = \id^{(i)},
\end{equation}
with $\id^{(i)}$ being the identity operator on the system of the $i$-th party. The above also implies $A_{a_i|x_i} = \mathbbm{O}$ if $a_i\notin C_{x_i}$.

Next, due to Eqs.~\eqref{Asum2} and \eqref{Asum3}, we observe that $A_{x,y}$ and $\tilde{A}_{x,y}$ have orthogonal supports. For a different input on the last party, say $y'$, the support of $A_{x,y'}$ must lie within the support of $\tilde{A}_{x,y}$, since $\mathcal{P}(C_x\setminus \{y'\})$ is a subset of $\cup_{k\in C_x, k \neq y} \mathcal{P}(C_x\setminus \{k\})$. This necessitates $A_{x,y}$ and $A_{x,y'}$ are orthogonal for all $y,y' \in C_x$,
\begin{equation} 
 A_{x,y} A_{x,y'} = \mathbbm{O}.
\end{equation}
Substituting the expression of $A_{x,y}$ from Eq.~\eqref{Axydef} into this equation and setting each term equal to zero, we obtain, for every $i\in [d-2]$ and for all pairs $y,y' \in C_{x_i}$,
\begin{equation} \label{Ayortho}
 A_{a_i=y|x_i}A_{a_i=y'|x_i} = \mathbbm{O}.
\end{equation}
 
Hence, Eqs.~\eqref{Ayid}, \eqref{Ayortho}, and \eqref{AaiCom}, together imply that $\{A_{a_i|x_i}\}$ represents a realization of the KS set.
Therefore, if the KS set is \textit{rigid}, then any two sets of local measurements by each party are linked by unitary, admitting self-testing given by \eqref{Aks-st}.
\end{proof}

%%%%%%%%%%%%%%%%%%%%%%%%%%%%%%%%%%%%%%%%%%%%%%%%%%%%%%%%%%%%%%%%%%%

\section{Proof of Theorem \ref{thm:st-ss}}\label{app2}

%%%%%%%%%%%%%%%%%%%%%%%%%%%%%%%%%%%%%%%%%%%%%%%%%%%%%%%%%%%%%%%%%%%

\begin{proof}
From Theorem \ref{thm:st-mea}, we know that, if a KS set $\{\ket{v_i}\}_{i=0}^{n-1}$ in $\mathbbm{C}^d$ is rigid, then the respective PQS implies self-testing of the local measurements given by Eqs.~\eqref{Aks-st} and \eqref{Bks-st}, along with the fact that the Hilbert space of $i$-th party can be decomposed as $\mathcal{H}_i=\mathbbm{C}^d\otimes \mathcal{H}'_i$. 
By substituting the measurements from Eqs.~\eqref{Aks-st} and \eqref{Bks-st} and the transformed unknown state $(\otimes_i U_i)\ket{\psi}$, we find that the probabilities take the form $p(\vec{a},b|\vec{x},y) = \tr\left[ \left(\bigotimes_{i} \overline{A}_{a_i|x_i}\right) \overline{\rho} \right]$, where $\{\overline{A}_{a_i|x_i}\}_{a_i,x_i}$ are the reference measurements corresponding to the KS set of vectors, and 
\begin{equation}
\overline{\rho} = \tr_{\otimes_i\mathcal{H}'_i} \left[(\otimes_iU_i)\ket{\psi}\!\bra{\psi} (\otimes_iU_i)^\dagger \right]
\end{equation}
is the reduced state obtained by tracing out the auxiliary subsystems on which the trivial measurements act [via Eqs.~\eqref{Aks-st} and \eqref{Bks-st}]. This shows that the probabilities in the PQS depend entirely on the reduced state $\overline{\rho}$, as expected from the definition of self-testing. Let the spectral decomposition of the reduced state be $\overline{\rho} = \sum_j c_j \ket{\overline{\psi_j}}\!\bra{\overline{\psi_j}}$, where each $\ket{\overline{\psi_j}} \in (\mathbbm{C}^{d})^{\otimes^n}$ is a $d$-partite quantum state with $d$-dimensional local subsystems. It can be easily checked that as PQS, defined by \eqref{eq:optcond}, is observed from $\overline{\rho}$, it must also be achieved for each $\ket{\overline{\psi_j}}$. Consider any of these states and denote it by $\ket{\overline{\psi}}$. We will show that this state must have a unique form of the supersinglet. Consequently, $\overline{\rho}$ must itself be a pure state of the form $\ket{\overline{\psi}}\!\bra{\overline{\psi}}$ and, furthermore, the unknown state $\otimes_iU_i\ket{\psi}= \ket{\overline{\psi}}\otimes \ket{\text{aux}}$, where the auxiliary state $ \ket{\text{aux}}$ belongs to $\otimes_i \mathcal{H}'_i$.

The general form of $\ket{\overline{\psi}}$ is given by 
\begin{equation} \label{gen-psi}
\ket{\overline{\psi}} = \sum_{i=0}^{d-1}\sum_{s_i=0}^{d-1} \alpha_{s_0s_1\ldots s_{d-1}} \ket{s_0s_1\ldots s_{d-1}},
\end{equation}
where, $\{\ket{s_i}\}$, with $s_i \in [d]$, denotes the canonical basis of the $i$-th party.
For any $x$ corresponding to a basis, the fact that it is a perfect quantum strategy imposes the condition
\begin{equation} \label{pqs-s}
 p(\Vec{a},1|\overline{x},y) = 0 \quad \forall y\in C_x \ \forall \Vec{a} \notin \mathcal{P}(C_x \setminus \{y\}),
\end{equation}
which, when replaced with the quantum expression, leads to
\begin{equation} \label{pqs-v}
\left( \bigotimes_{i=0}^{d-1} |v_{a_i}\rangle\!\langle v_{a_i}| \right) \ket{\overline{\psi}} = 0 \quad \forall (a_0,\ldots,a_{d-1}) \notin \mathcal{P}(C_x).
\end{equation}
Without loss of generality, we can choose one context or basis, say $C_0 \equiv \{0,\ldots,d-1\}$, in the KS set to be the canonical basis, $\{\ket{v_j}\}_{j=0}^{d-1} \equiv \{\ket{t}\}_{t=0}^{d-1}$. Taking that basis, Eq.~\eqref{pqs-s} becomes 
\begin{equation} \label{gen-psi-rel}
 \left( \bigotimes^{d-1}_{i=0} |t_i\rangle\!\langle t_i| \right) \ket{\overline{\psi}} = 0
 \quad \forall (t_0,t_1,\ldots, t_{d-1}) \notin \mathcal{P}([d]),
\end{equation}
where $\{\ket{t_i}\}$ denotes the canonical basis for $i$-th party. By substituting the general of $\ket{\overline{\psi}}$ from \eqref{gen-psi} in the above relation \eqref{gen-psi-rel} and equating the coefficients of each basis vector to zero, we find that
$\alpha_{s_0s_1\ldots s_{d-1}} = 0$ whenever $(s_0,s_1,\ldots, s_{d-1}) \notin \mathcal{P}([d])$. This simplifies the form of $\ket{\overline{\psi}}$ to 
\begin{equation} \label{simpsi}
 \ket{\overline{\psi}} = \sum_{\substack{s_0,s_1, \ldots, s_{d-1} \\
 (s_0,s_1,\ldots, s_{d-1}) \in \mathcal{P}([d])}} \alpha_{s_0s_1\ldots s_{d-1}} \ket{s_0s_1\ldots s_{d-1}}.
\end{equation}

The above analysis holds for any perfect quantum strategy that is based on a complete rigid KS set. From here onward, we will focus on a specific KS set for each $d$ that is rigid and demonstrate that the shared state in Eq.~\eqref{simpsi} is necessarily $d$-party $d$-level supersinglet. Recall, however, that there exist rigid KS sets in any $d \ge 3$ \cite{Xu:2024PRL}.

For $d=3$, we consider the 31-vector KS set proposed by Conway and Kochen \cite{Peres:1993}. The explicit form of this set, which includes the canonical basis, is available in Table~IV of \cite{Trandafir:2024XXX}. This set has been shown to be rigid \cite{Trandafir:2024b}, making Theorem~\ref{thm:st-mea} applicable for the corresponding perfect quantum strategy. Applying Eqs.~\eqref{pqs-s} and \eqref{simpsi} to this case, we obtain 
\begin{equation} \label{pqs-ck31}
\left( \bigotimes_{i=0}^2 |v_{a_i}\rangle\!\langle v_{a_i}| \right) \ket{\overline{\psi}} = 0 \quad \forall (a_0,a_1,a_2) \notin \mathcal{P}(C_x),
\end{equation}
where
\begin{equation} \label{psi-ck31}
 \ket{\overline{\psi}} = \sum_{\substack{i,j,k \in [3] \\ (i,j,k) \in \mathcal{P}([3])}}\alpha_{ijk} \ket{ijk} .
\end{equation}
This KS set contains two bases $\{\ket{v_0}\equiv \ket{0},\ket{v_3},\ket{v_4}\}$ and $\{\ket{v_1}\equiv \ket{1},\ket{v_5},\ket{v_6}\}$, where the vectors are given as follows:
\begin{equation}
\ket{v_3} = \frac{1}{\sqrt{2}}\begin{bmatrix}
 0 \\ 1 \\ -1 
\end{bmatrix},\quad
\ket{v_4} = \frac{1}{\sqrt{2}} \begin{bmatrix}
 0 \\ 1 \\ 1 
\end{bmatrix}, \quad
\ket{v_5} = \frac{1}{\sqrt{2}} \begin{bmatrix}
 1 \\ 0 \\ -1 
\end{bmatrix}, \quad
\ket{v_6} = \frac{1}{\sqrt{2}} \begin{bmatrix}
 1 \\ 0 \\ 1 
\end{bmatrix}.
\end{equation}
Consequently, Eq.~\eqref{pqs-ck31} holds for the two respective contexts $C_1 = \{0,3,4\}$ and $C_2=\{1,5,6\}$. After substituting one triple of outcomes, say $(0,3,3)$, which does not belong to $\mathcal{P}(C_1)$, in Eq.~\eqref{pqs-ck31} with the shared state \eqref{psi-ck31}, yields
\begin{equation}
 \left( |0\rangle\!\langle 0| \otimes |v_3\rangle\!\langle v_3| \otimes |v_3\rangle\!\langle v_3| \right) \sum_{(i,j,k) \in \mathcal{P}([3])} \alpha_{ijk} \ket{ijk} = 0.
\end{equation}
This reduces to 
\begin{equation}
 -\frac12 \left( \alpha_{012} + \alpha_{021} \right) \ket{0}\!\ket{v_3}\!\ket{v_3} = 0,
\end{equation}
which directly implies $\alpha_{012} + \alpha_{021} = 0$. 
Following a similar process, by systematically considering all possible outcomes that are not permutations within the two contexts $C_1$ and $C_2$, and substituting them into Eq.~\eqref{pqs-ck31} for the state \eqref{psi-ck31}, we obtain five independent relations among the coefficients $\{\alpha_{ijk}\}$:
\begin{align}
\alpha_{012} + \alpha_{021} &= 0, \\
\alpha_{102} + \alpha_{201} &= 0, \\
\alpha_{120} + \alpha_{210} &= 0, \\ 
\alpha_{102} + \alpha_{120} &= 0, \\ 
\alpha_{012} + \alpha_{210} &= 0.
\end{align}
These equations are simplified to
\begin{equation}
 \alpha_{012} = - \alpha_{021} = \alpha_{120} = - \alpha_{102} = \alpha_{201} = -\alpha_{210}.
\end{equation} 
Further, the normalization condition on the state ensures that the state must be the three-party three-level supersinglet.

Next, consider the Peres-24 set in $\mathbbm{C}^4$, whose explicit form is given in Table \ref{tab:peres-24}. This set has been proven to be rigid \cite{Xu:2024PRL}, so Theorem \ref{thm:st-mea} applies to the corresponding perfect quantum strategy. Equations~\eqref{pqs-s} and \eqref{simpsi} in this case demand 
\begin{equation} \label{pqs-p24}
\left( \bigotimes_{i=0}^3 |v_{a_i}\rangle\!\langle v_{a_i}| \right) \ket{\overline{\psi}} = 0 ,\quad \forall (a_0,a_1,a_2,a_3) \notin \mathcal{P}(C_x),
\end{equation}
where
\begin{equation} \label{psi-p24}
 \ket{\overline{\psi}} = \sum_{\substack{i,j,k,l \in [4] \\ (i,j,k,l) \in \mathcal{P}([4])}}\alpha_{ijkl} \ket{ijkl} .
\end{equation}
We focus on the two bases $\{\ket{v_4},\ket{v_5},\ket{v_6},\ket{v_7}\}$ and $\{\ket{v_8},\ket{v_{9}},\ket{v_{10}},\ket{v_{11}}\}$, whose explicit forms are provided in Table~\ref{tab:peres-24}. 
Taking similar approach as before, we consider all possible outcomes that are not permutations of the respective contexts $\{4,5,6,7\}$ and $\{8,9,10,11\}$, and substitute them 
in Eq.~\eqref{pqs-p24}, together with the state \eqref{psi-p24}. This results in a set of 23 linearly independent equations involving 24 variables $\alpha_{ijkl}$, where $(i,j,k,l) \in \mathcal{P}([4])$ as follows:
\begin{equation} \label{lin-eqs}
\begin{bmatrix}
0 & 0 & 0 & 0 & 0 & 0 & 0 & 0 & 0 & 0 & 0 & 0 & 0 & 0 & 0 & 0 & 1 & 1 & 0 & 0 & 0 & 0 & 1 & 1 \\
0 & 0 & 0 & 0 & 0 & 0 & 0 & 0 & 0 & 0 & 0 & 0 & 0 & 0 & 0 & 0 & -1 & 1 & 0 & 0 & 0 & 0 & -1 & 1 \\
0 & 0 & 0 & 0 & 0 & 0 & 0 & 0 & 0 & 0 & 0 & 0 & 0 & 0 & 0 & 0 & -1 & -1 & 0 & 0 & 0 & 0 & 1 & 1 \\
0 & 0 & 0 & 0 & 0 & 0 & 0 & 0 & 0 & 0 & 0 & 0 & 0 & 1 & 0 & 1 & 0 & 0 & 0 & 1 & 0 & 1 & 0 & 0 \\
0 & 0 & 0 & 0 & 0 & 0 & 0 & 0 & 0 & 0 & 0 & 0 & 0 & -1 & 0 & 1 & 0 & 0 & 0 & -1 & 0 & 1 & 0 & 0 \\
0 & 0 & 0 & 0 & 0 & 0 & 0 & 0 & 0 & 0 & 0 & 0 & 0 & -1 & 0 & -1 & 0 & 0 & 0 & 1 & 0 & 1 & 0 & 0 \\
0 & 0 & 0 & 0 & 0 & 0 & 0 & 0 & 0 & 0 & 0 & 0 & 1 & 0 & 1 & 0 & 0 & 0 & 1 & 0 & 1 & 0 & 0 & 0 \\
0 & 0 & 0 & 0 & 0 & 0 & 0 & 0 & 0 & 0 & 0 & 0 & -1 & 0 & -1 & 0 & 0 & 0 & 1 & 0 & 1 & 0 & 0 & 0 \\
0 & 0 & 0 & 0 & 0 & 0 & 0 & 0 & 0 & 0 & 0 & 0 & -1 & 0 & 1 & 0 & 0 & 0 & -1 & 0 & 1 & 0 & 0 & 0 \\
0 & 0 & 0 & 1 & 0 & 1 & 0 & 0 & 0 & 1 & 0 & 1 & 0 & 0 & 0 & 0 & 0 & 0 & 0 & 0 & 0 & 0 & 0 & 0 \\
0 & 0 & 0 & -1 & 0 & -1 & 0 & 0 & 0 & 1 & 0 & 1 & 0 & 0 & 0 & 0 & 0 & 0 & 0 & 0 & 0 & 0 & 0 & 0 \\
0 & 0 & 0 & -1 & 0 & 1 & 0 & 0 & 0 & -1 & 0 & 1 & 0 & 0 & 0 & 0 & 0 & 0 & 0 & 0 & 0 & 0 & 0 & 0 \\
0 & 0 & 1 & 0 & 1 & 0 & 0 & 0 & 1 & 0 & 1 & 0 & 0 & 0 & 0 & 0 & 0 & 0 & 0 & 0 & 0 & 0 & 0 & 0 \\
0 & 0 & -1 & 0 & 1 & 0 & 0 & 0 & -1 & 0 & 1 & 0 & 0 & 0 & 0 & 0 & 0 & 0 & 0 & 0 & 0 & 0 & 0 & 0 \\
0 & 0 & -1 & 0 & -1 & 0 & 0 & 0 & 1 & 0 & 1 & 0 & 0 & 0 & 0 & 0 & 0 & 0 & 0 & 0 & 0 & 0 & 0 & 0 \\
1 & 1 & 0 & 0 & 0 & 0 & 1 & 1 & 0 & 0 & 0 & 0 & 0 & 0 & 0 & 0 & 0 & 0 & 0 & 0 & 0 & 0 & 0 & 0 \\
-1 & 1 & 0 & 0 & 0 & 0 & -1 & 1 & 0 & 0 & 0 & 0 & 0 & 0 & 0 & 0 & 0 & 0 & 0 & 0 & 0 & 0 & 0 & 0 \\
-1 & -1 & 0 & 0 & 0 & 0 & 1 & 1 & 0 & 0 & 0 & 0 & 0 & 0 & 0 & 0 & 0 & 0 & 0 & 0 & 0 & 0 & 0 & 0 \\
0 & 0 & 0 & 0 & 0 & 0 & 0 & 0 & 0 & 0 & 1 & 1 & 0 & 0 & 0 & 0 & 0 & 0 & 0 & 0 & 1 & 1 & 0 & 0 \\
0 & 0 & 0 & 0 & 0 & 0 & 0 & 0 & 0 & 0 & -1 & 1 & 0 & 0 & 0 & 0 & 0 & 0 & 0 & 0 & -1 & 1 & 0 & 0 \\
0 & 0 & 0 & 0 & 0 & 0 & 0 & 0 & 0 & 0 & -1 & -1 & 0 & 0 & 0 & 0 & 0 & 0 & 0 & 0 & 1 & 1 & 0 & 0 \\
0 & 0 & 0 & 0 & 0 & 0 & 0 & 1 & 0 & 1 & 0 & 0 & 0 & 0 & 0 & 0 & 0 & 0 & 1 & 0 & 0 & 0 & 0 & 1 \\
0 & 0 & 0 & 0 & 0 & 0 & 0 & -1 & 0 & 1 & 0 & 0 & 0 & 0 & 0 & 0 & 0 & 0 & -1 & 0 & 0 & 0 & 0 & 1 
\end{bmatrix}
\begin{bmatrix}
 \alpha_{0123} \\
\alpha_{0132} \\
 \alpha_{0213} \\
 \alpha_{0 2 3 1}\\
 \alpha_{0 3 1 2}\\
 \alpha_{0 3 2 1}\\
 \alpha_{1 0 2 3}\\
 \alpha_{1 0 3 2}\\
 \alpha_{1 2 0 3}\\
 \alpha_{1 2 3 0}\\
 \alpha_{1 3 0 2}\\
 \alpha_{1 3 2 0}\\
 \alpha_{2 0 1 3}\\
 \alpha_{2 0 3 1}\\
 \alpha_{2 1 0 3}\\
 \alpha_{2 1 3 0}\\
 \alpha_{2 3 0 1}\\
 \alpha_{2 3 1 0}\\
 \alpha_{3 0 1 2}\\
 \alpha_{3 0 2 1}\\
 \alpha_{3 1 0 2}\\
 \alpha_{3 1 2 0}\\
 \alpha_{3 2 0 1}\\
 \alpha_{3 2 1 0}\\
\end{bmatrix} = 
\begin{bmatrix}
 0 \\
 0 \\
 0 \\
 0 \\
 0 \\
 0 \\
 0 \\
 0 \\
 0 \\
 0 \\
 0 \\
 0 \\
 0 \\
 0 \\
 0 \\
 0 \\
 0 \\
 0 \\
 0 \\
 0 \\
 0 \\
 0 \\
 0 \\
 0 \\
\end{bmatrix}.
\end{equation}
Solving these gives the following relations: 
\begin{eqnarray} \label{1234}
 && \alpha_{0123} = - \alpha_{0132} = - \alpha_{0213} = \alpha_{0231} = \alpha_{0312} = - \alpha_{0321} \nonumber \\
 &=& - \alpha_{2134} = \alpha_{2143} = \alpha_{2314} = - \alpha_{2341} = - \alpha_{2413} = \alpha_{2431} \nonumber \\ 
 &=& \alpha_{2013} = - \alpha_{2031} = - \alpha_{2103} = \alpha_{2130} = \alpha_{2301} = - \alpha_{2310} \nonumber \\ 
 &=& - \alpha_{3012} = \alpha_{3021} = \alpha_{3102} = - \alpha_{3120} = - \alpha_{3201} = \alpha_{3210}.
\end{eqnarray}
Using the above relations with the normalization condition, we conclude that $\ket{\overline{\psi}}$ is the supersinglet of $d=4$. 

To extend this to $d$-party, $d$-level supersinglet with $d\geqslant 4$, we employ the $d$-dimensional KS set introduced in \cite{Xu:2024PRL}. These sets have been shown to be rigid for all $d$ \cite{Xu:2024PRL}, and thus, Theorem~\ref{thm:st-mea} is applicable. This $d$-dimensional KS set is constructed by merging $(d-3)$ Peres-24 sets of vectors, each in four-dimensional subspaces. To define if explicitly, let $\mathcal{S}_k$ denote the subspace spanned by the canonical basis vectors $\{\ket{k},\ket{k+1},\ket{k+2},\ket{k+3}\}$, where $k \in [d-3]$. Let $\ket{v^k_i}$ represent the $d$-dimensional vector such that the $i$-th vector from Table~\ref{tab:peres-24} appears in the subspace of $\mathcal{S}_k$, with all other elements in the respective vector set to zero. The KS set is then defined as,
\begin{equation}
 \bigcup_{k=0}^{d-4} \left\{\ket{v^k_i} \right\}_{i=0}^{23} ,
\end{equation}
where $\ket{v_i}$ are given in Table \ref{tab:peres-24}.

Since this KS set includes the canonical basis, we know that the state $\ket{\overline{\psi}}$ must be of the form \eqref{simpsi}. It can be noted that the following two bases are present in this KS set,
\begin{equation} \label{b1k}
\{\ket{v^{k}_4}, \ket{v^{k}_5}, \ket{v^{k}_6}, \ket{v^{k}_7}\} \cup \{\ket{t} \}_{t\neq k,k+1,k+2,k+3}
\end{equation} and 
\begin{equation} \label{b2k}
\{\ket{v^{k}_8}, \ket{v^{k}_{9}}, \ket{v^{k}_{10}},\ket{v^{k}_{11}}\} \cup \{\ket{t} \}_{t\neq k,k+1,k+2,k+3}
\end{equation} 
for every $k \in [d-3]$, where $\{\ket{t}\}$ denotes the canonical basis. 

Let us first focus on the 24 unknown coefficients in Eq.~\eqref{simpsi} that are of the form $\alpha_{P(s_0,s_1,s_2,s_3,s_4,s_5,\ldots, s_{d-1})}$, in which we fix the values of $s_4,s_5,\ldots, s_{d-1}$, and $(s_0,s_1,s_2,s_3) \in \mathcal{P}([4])$. Hereafter we will use the notation $P(\cdot )$ to denote any permutation of the respective set. Taking two bases given by Eqs.~\eqref{b1k} and \eqref{b2k}, with $k=0$, and substituting them in Eq.~\eqref{pqs-v} along with the state \eqref{simpsi}, we get the same set of equations listed in \eqref{lin-eqs}. Consequently, these 24 coefficients must satisfy the relations \eqref{1234}. Therefore, given any values of $s_4,\ldots, s_{d-1}$, and any permutation $P(\cdot)$ we have 
\begin{equation} \label{al1234p}
\alpha_{P(0,1,2,3,s_4,\ldots ,s_{d-1})} = \varepsilon_{s_0 s_1 s_2 s_3 s_4 \ldots s_{d-1}} \alpha_{P(s_0,s_1,s_2,s_3,s_4,\ldots ,s_{d-1})},
\end{equation}
where $\varepsilon_{s_0 s_1 s_2 s_3 s_4 \ldots s_{d-1}}$ is the Levi-Civita symbol.
 
The next step is to determine the coefficients of the form $\alpha_{s_0,s_1,s_2,s_3,s_4,\ldots ,s_{d-1}}$, in which $(s_0,s_1,s_2,s_3,s_4) \in \mathcal{P}[\{0,1,2,3,s_4\}]$ and other $s_i$'s are fixed. By taking different $P(\cdot)$ in Eq.~\eqref{al1234p}, we obtain the following relations:
\begin{align}
 \alpha_{0,1,2,3,s_4,s_5,\ldots ,s_{d-1}} & = \varepsilon_{s_0 s_1 s_2 s_3 s_4 s_5 \ldots s_{d-1}} \alpha_{s_0,s_1,s_2,s_3,s_4,s_5,\ldots ,s_{d-1}}, \label{p4-1} \\
 \alpha_{0,1,2,s_4,3,s_5,\ldots ,s_{d-1}} & = \varepsilon_{s_0 s_1 s_2 s_4 s_3 s_5\ldots s_{d-1}} \alpha_{s_0,s_1,s_2,s_4,s_3,s_5,\ldots ,s_{d-1}}, \label{p4-2} \\
 \alpha_{0,1,s_4,3,2,s_5,\ldots ,s_{d-1}} & = \varepsilon_{s_0 s_1 s_4 s_3 s_2 s_5\ldots s_{d-1}} \alpha_{s_0,s_1,s_4,s_3,s_2,s_5,\ldots ,s_{d-1}}, \label{p4-3} \\
 \alpha_{0,s_4,2,3,1,s_5,\ldots ,s_{d-1}} & = \varepsilon_{s_0 s_4 s_2 s_3 s_1 s_5\ldots s_{d-1}} \alpha_{s_0,s_4,s_2,s_3,s_1,s_5,\ldots ,s_{d-1}}, \label{p4-4} \\
 \alpha_{s_4,1,2,3,0,s_5,\ldots ,s_{d-1}} & = \varepsilon_{s_4 s_1 s_2 s_3 s_0 s_5\ldots s_{d-1}} \alpha_{s_4,s_1,s_2,s_3,s_0,s_5,\ldots ,s_{d-1}}, \label{p4-5}
\end{align}
where $(s_0,s_1,s_2,s_3) \in \mathcal{P}([4])$. For every permutation, the above five equations involve distinct sets of coefficients. So we have to find relations between coefficients that appear in different sets. For that, let us again take two bases \eqref{b1k} and \eqref{b2k}, with $k=1$, and substitute them in Eq.~\eqref{pqs-v} with the state \eqref{simpsi}. This will yield 
\begin{equation}\label{al234s5p}
\alpha_{P(0,1,2,3,s_4,s_5,\ldots,s_{d-1})} = \varepsilon_{0 s_1 s_2 s_3 s_4 s_5\ldots s_{d-1}} \alpha_{P(0,s_1,s_2,s_3,s_4,s_5,\ldots,s_{d-1})},
\end{equation}
for $(s_1,s_2,s_3,s_4)\in \mathcal{P}(\{1,2,3,s_4\})$. Taking the trivial permutation in Eq.~\eqref{al234s5p}, we have
%ERE
\begin{align}
 \alpha_{0,1,2,3,s_4,s_5,\ldots,s_{d-1}} &= - \alpha_{0,1,2,s_4,3,s_5,\ldots,s_{d-1}}, \label{s51}\\
 \alpha_{0,1,2,3,s_4,s_5,\ldots,s_{d-1}} &= - \alpha_{0,1,s_4,3,2,s_5,\ldots,s_{d-1}}, \label{s52}\\
 \alpha_{0,1,2,3,s_4,s_5,\ldots,s_{d-1}} &= - \alpha_{0,s_4,2,3,1,s_5,\ldots,s_{d-1}}. \label{s53} 
\end{align}
While, taking the permutation $P(\cdot)$ in Eq.~\eqref{al234s5p} as the transposition between $0$ and $s_4$, we obtain
\begin{equation}
 \alpha_{s_4,1,2,3,0,s_5,\ldots,s_{d-1}} = - \alpha_{1,s_4,2,3,0,s_5,\ldots,s_{d-1}},
\end{equation}
 which using Eq.~\eqref{p4-2} implies
 \begin{equation}
 \label{s54} \alpha_{0,1,2,3,s_4,s_5,\ldots,s_{d-1}} = - \alpha_{s_4,1,2,3,0,s_5,\ldots,s_{d-1}}.
 \end{equation}
Subsequently, Eqs.~\eqref{al1234p}, \eqref{p4-1}-\eqref{p4-5}, and \eqref{s51}-\eqref{s54} imply
\begin{equation}\label{al1234s5}
\alpha_{0,1,2,3,s_4,s_5,\ldots,s_{d-1}} = \varepsilon_{s_0 s_1 s_2 s_3 s_4 s_5\ldots s_{d-1}}\alpha_{s_0,s_1,s_2,s_3,s_4,s_5,\ldots,s_{d-1}}
\end{equation}
for $(s_0,s_1,s_2,s_3,s_4)\in \mathcal{P}(\{0,1,2,3,s_4\})$ and for any fixed values of $s_5\ldots s_{d-1}$. The same computation, from Eq.~\eqref{al1234p} to Eq.~\eqref{al1234s5}, can be done to prove 
\begin{equation}\label{als12345}
\alpha_{s_0,s_1,s_2,s_3,s_4,s_5,\ldots,s_{d-1}} = \varepsilon_{s_i s_j s_k s_l s_4 s_5\ldots s_{d-1}} \alpha_{s_i,s_j,s_k,s_l,s_4,s_5,\ldots,s_{d-1}}
\end{equation}
for $(s_i,s_j,s_k,s_l)\in \mathcal{P}(\{s_0,s_1,s_2,s_3\})$ for any $s_0,s_1,s_2,s_3 \in [d]$ and any fixed values of $s_4 s_5\ldots s_{d-1}$. For instance, to prove this relation \eqref{als12345} for $s_0=0,s_1=1$ and any $s_2=6,s_3=8$, we first consider Eq.~\eqref{al1234s5} with $s_4=6$
and then consider \eqref{als12345} again with $s_4=8$ with the first four elements to be $0,1,6,3$. 

To get the relation between all possible permutations of $s_0\ldots s_5$, we can fix different values of $s_0$ and do the same analysis by substituting the bases \eqref{b1k} and \eqref{b2k} by taking $k=1$ and $k=2$. This process can be executed further recursively up to $s_{d-1}$ since it is possible to generate all possible permutations $\mathcal{P}([d])$ from the canonical order by taking the transposition of two consecutive elements. This leads to our desired relation, 
\begin{equation}
 \alpha_{0,1,\ldots,{d-1}} = \varepsilon_{s_0 s_1\ldots s_{d-1}} \alpha_{s_0,s_1,\ldots,s_{d-1}}
\end{equation}
for all $(s_0,\ldots, s_{d-1}) \in \mathcal{P}([d])$. Finally, using the normalization condition, we conclude that the state must be the $d$-party $d$-level supersinglet.
\end{proof}

%%%%%%%%%%%%%%%%%%%%%%%%%%%%%%%%%%%%%%%%%%%%%%%%%%%%%%%%%%%%%%%%%%%
% Table I
%%%%%%%%%%%%%%%%%%%%%%%%%%%%%%%%%%%%%%%%%%%%%%%%%%%%%%%%%%%%%%%%%%%

\begin{center}
\begin{table}[ht!]
\begin{tabular}{cccccccccccccccccccccccc}
\hline 
$\ket{v_0}$ & $\ket{v_1}$ & $\ket{v_2}$ & $\ket{v_3}$ & $\ket{v_4}$ & $\ket{v_5}$ & $\ket{v_6}$ & $\ket{v_7}$ & $\ket{v_8}$ & $\ket{v_{9}}$ & $\ket{v_{10}}$ & $\ket{v_{11}}$ & $\ket{v_{12}}$ & $\ket{v_{13}}$ & $\ket{v_{14}}$ & $\ket{v_{15}}$ & $\ket{v_{16}}$ & $\ket{v_{17}}$ & $\ket{v_{18}}$ & $\ket{v_{19}}$ & $\ket{v_{20}}$ & $\ket{v_{21}}$ & $\ket{v_{22}}$ & $\ket{v_{23}}$ \\
\hline 
&&&&&&&&&&&&&&&&&&&&&&& \\
$\begin{bmatrix}
 1 \\ 0 \\ 0 \\ 0 
\end{bmatrix}$ & 
$\begin{bmatrix}
 0 \\ 1 \\ 0 \\ 0 
\end{bmatrix}$ & 
$\begin{bmatrix}
 0 \\ 0 \\ 1 \\ 0 
\end{bmatrix}$ & 
$\begin{bmatrix}
 0 \\ 0 \\ 0 \\ 1 
\end{bmatrix}$ & 
$\begin{bmatrix}
 1 \\ 1 \\ 0 \\ 0 
\end{bmatrix}$ & 
$\begin{bmatrix}
 1 \\ \bar{1} \\ 0 \\ 0 
\end{bmatrix}$ & 
$\begin{bmatrix}
 0 \\ 0 \\ 1 \\ 1 
\end{bmatrix}$ & 
$\begin{bmatrix}
 0 \\ 0 \\ 1 \\ \bar{1} 
\end{bmatrix}$ & 
$\begin{bmatrix}
 1 \\ 0 \\ 1 \\ 0 
\end{bmatrix}$ & 
$\begin{bmatrix}
 1 \\ 0 \\ \bar{1} \\ 0 
\end{bmatrix}$ & 
$\begin{bmatrix}
 0 \\ 1 \\ 0 \\ 1 
\end{bmatrix}$ & 
$\begin{bmatrix}
 0 \\ 1 \\ 0 \\ \bar{1} 
\end{bmatrix}$ & 
$\begin{bmatrix}
 1 \\ 1 \\ 1 \\ 1 
\end{bmatrix}$ & 
$\begin{bmatrix}
 1 \\ 1 \\ \bar{1} \\ \bar{1} 
\end{bmatrix}$ & 
$\begin{bmatrix}
 1 \\ \bar{1} \\ 1 \\ \bar{1} 
\end{bmatrix}$ & 
$\begin{bmatrix}
 1 \\ \bar{1} \\ \bar{1} \\ 1 
\end{bmatrix}$ & 
$\begin{bmatrix}
 1 \\ 1 \\ 1 \\ \bar{1} 
\end{bmatrix}$ & 
$\begin{bmatrix}
 1 \\ 1 \\ \bar{1} \\ 1 
\end{bmatrix}$ & 
$\begin{bmatrix}
 1 \\ \bar{1} \\ 1 \\ 1 
\end{bmatrix}$ & 
$\begin{bmatrix}
 \bar{1} \\ 1 \\ 1 \\ 1 
\end{bmatrix}$ & 
$\begin{bmatrix}
 1 \\ 0 \\ 0 \\ 1 
\end{bmatrix}$ & 
$\begin{bmatrix}
 1 \\ 0 \\ 0 \\ \bar{1}
\end{bmatrix}$ & 
$\begin{bmatrix}
 0 \\ 1 \\ 1 \\ 0 
\end{bmatrix}$ & 
$\begin{bmatrix}
 0 \\ 1 \\ \bar{1} \\ 0 
\end{bmatrix}$ \\
\end{tabular}
\caption{Peres-24 set of vectors (unnormalized), where $\bar{1}=-1$.}
\label{tab:peres-24}
\end{table}
\end{center}

%%%%%%%%%%%%%%%%%%%%%%%%%%%%%%%%%%%%%%%%%%%%%%%%%%%%%%%%%

%\bibliographystyle{apsrev4-2} %Remove to allow the longbibliography option to work
%\bibliography{common2}

%

%%%%%%%%%%%%%%%%%%%%%%%%%%%%%%%%%%%%%%%%%%%%%%%%%%%%%%%%%%%%%%%%%%%

\end{document}